\documentclass[conference,a4paper]{IEEEtran}

\usepackage{cite}
\usepackage{amsmath,amssymb}
\usepackage{mathtools}
\usepackage{graphicx}
\usepackage{xcolor}
\usepackage{comment}

\newcommand{\medleftparen}{\mathopen{\scalebox{1}[1.15]{\( (\)}}}
\newcommand{\medrightparen}{\mathclose{\scalebox{1}[1.15]{\( )\)}}}

\newtheorem{theorem}{Theorem}

\newtheorem{proposition}{Proposition}
\newtheorem{corollary}{Corollary}
\newtheorem{remark}{Remark}

\title{Sequential Lossy Compression With Causal Conditional Perception}
\author{\IEEEauthorblockN{Photios A. Stavrou and Zixuan He}
\IEEEauthorblockA{Communication Systems Department, EURECOM\\
Sophia-Antipolis, France\\
\texttt{\{fotios.stavrou,zixuan.he\}@eurecom.fr}}}

\begin{document}
\maketitle
\begin{abstract}
In this paper, we study sequential lossy compression under a causal conditional perception criterion comparing source and reconstruction distributions given the same reconstruction history. For first-order Markov sources, we formulate the finite-horizon nonanticipative rate-distortion-perception function (NRDPF) with stagewise constraints and establish one-shot lower and upper bounds on the minimum variable-length sum rate using a strengthened strong functional-representation lemma (SFRL) and common randomness. For time-varying scalar Gauss--Markov sources under pointwise mean-squared error (MSE) and conditional squared Wasserstein-$2$ fidelity, we prove Gaussian optimality, derive a log-variance characterization, and obtain a closed-form solution that recovers the classical Gaussian nonanticipative rate-distortion function (NRDF) when perception is unconstrained and the classical Gaussian RDPF when the source is stationary and memoryless.
\end{abstract}
\begin{IEEEkeywords}
Sequential source coding, perceptual compression, Wasserstein distance, Gauss-Markov processes, one-shot achievability.
\end{IEEEkeywords}
\section{Introduction}
\label{sec:intro}

Low-latency video compression must exploit temporal dependence while producing perceptually realistic reconstructions at limited rates. Conventional distortion measures, such as the mean-squared error (MSE), quantify samplewise fidelity but may favor blurry or visually implausible outputs at low rates. This limitation motivated the rate-distortion-perception (RDP) framework, in which distortion is complemented by a distributional fidelity constraint \cite{blau:2019}. In sequential video compression, perceptual fidelity must additionally account for the temporal structure of the source and the causal operation of the decoder.
\par Sequential source coding of correlated sources was introduced by  \cite{viswanathan:2000} aiming to provide an information-theoretic framework to study sequential compression problems. This framework was extended by \cite{ma:2011} to characterize the corresponding stagewise rate-distortion region, which retains the complete vector of rates and accounts for possible encoding and decoding delays. A related but distinct problem is to minimize the sum of the rates over the coding horizon. In this direction, \cite{yang:2011} characterized and developed a computational method for the minimum total rate of causal video coding, whereas \cite{stavrou:2021tac2} derived finite-horizon lower and upper bounds, together with dynamic rate-allocation methods, for the minimum total rate of sequentially encoded stochastic systems. Interestingly, the sum-rate formulation in these works results in a lower-bound characterization known as the sequential or nonanticipative rate-distortion function (NRDF) \cite{gorbunov:1973,tatikonda:2004}.

Recently, two companion papers \cite{sadaf:2023,sadaf:2025} incorporated perceptual fidelity into sequential lossy compression. They considered perceptual loss functions (PLFs) on frame-wise marginal distributions \cite{mentzer:2022} and on the joint distribution of the source frames \cite{veerabadran:2020}, and subsequently introduced a self-adaptive perceptual loss function (PLF-SA) that involves the previous reconstructions and the current frame. These works formulate and characterize sequential operational RDP regions. In contrast to \cite{sadaf:2023,sadaf:2025}, \cite{zhang:2026} studies minimum-sum-rates under total-distortion and global Kullback-Leibler perception constraint on the full sequence. We also minimize the sum-rate, extending \cite{sadaf:2023,sadaf:2025} through the \textit{conditional self-adaptive perception loss function (PLF-CSA)}, a fibrewise causal criterion comparing source and reconstruction laws conditioned on the same reconstruction history. This criterion is motivated by predictive video
compression, where the current frame is reconstructed from the received
bitstream using previously reconstructed frames \cite{lu:2019}. It is also relevant to sequential
semantic communication, where the interpretation and relevance
of the current reconstruction may depend on the context
previously recovered by the receiver \cite{strinati:2024}.

{\bf Contributions.} For first-order Markov sources, we
(i) formulate the NRDPF under per-frame expected-distortion
and stagewise conditional-perception constraints, and
(ii) use a strengthened SFRL \cite{li:2026} and common randomness
to derive one-shot lower and upper bounds on the minimum
variable-length prefix-free sum rate. For time-varying scalar
Gauss--Markov sources under MSE and conditional squared
Wasserstein-$2$ fidelity, we (iii) prove Gaussian optimality and
derive a scalar log-variance characterization and closed-form
solution, including perfect realism. The solution recovers the
scalar Gaussian NRDF \cite{gorbunov:1974} when perception is
inactive and the Gaussian RDPF \cite{zhang:2021,serra:2024} for
independent and identically distributed (i.i.d) sources.

\section{Problem statement}\label{sec:prob_stat}

Assume that we have $(n+1)$ video frames, denoted by $X^n=(X_0,\ldots,X_n)\in{\mathcal X^n}=(\mathcal X_0\times\cdots\times\mathcal X_n)$, where $\mathcal X_t\subseteq\mathbb R^p$ for every $t\in\mathbb{N}_0^n=\{0,\ldots,n\}$, distributed according to the joint distribution $P_{X^n}$. We assume that the frame sequence $X^n$ is a first-order Markov source; therefore, its joint distribution factorizes as
\begin{align}
P_{X^n}=P_{X_0}\prod_{t=1}^{n}P_{X_t\mid X_{t-1}}.
\label{first_order_markov}
\end{align}
\textbf{Sequential Operation.} The encoders and decoders have access to shared common randomness $K\in\mathcal K$, which is independent of $X^n$. The possibly stochastic $t$-th encoding function observes the source frames $X^t=(X_0,\ldots,X_t)$ and the common randomness $K$, and outputs a variable-length binary prefix-free message $M_t\in\mathcal M_t$, where $\mathcal M_t=\{0,1\}^{\star}$; that is, 
$f_t:\mathcal X^t\times\mathcal K
\longrightarrow\mathcal M_t,~t\in\mathbb{N}_0^n$.
For every \(t\) and every realization \((m^{t-1},k)\), the conditional codebook used to encode \(M_t\), given \((M^{t-1},K)=(m^{t-1},k)\), is assumed to be prefix-free.
Using $M^t$ and common randomness $K$, the possibly stochastic $t$-th decoding function produces $\hat{X}_t$, that is
$g_t:\mathcal M^t\times\mathcal K
\longrightarrow\mathcal{\hat{X}}_t,~t\in\mathbb{N}_0^n$.
Such a code $\{(f_t,g_t):~t\in\mathbb{N}_0^n\}$ \textit{ideally} induces a causal reproduction distribution
$\prod_{t=0}^n
P_{\hat X_t\mid X^t,\hat X^{t-1}}$.     
We impose a per-frame expected-distortion constraint
\begin{equation}
\mathbb E[d(X_t,\hat X_t)]\le D_t,~D_t\in[D_t^{\min},D^{\max}_{t}]\subseteq(0,\infty],~t\in\mathbb{N}_0^n\label{eq:pointwise-distortion}
\end{equation}
\noindent where by setting $D_t=\infty$ removes the $t$-frame distortion constraint.
To capture causal perceptual fidelity, we compare the source and reconstruction distributions conditioned on the same past reconstructions at each frame. For a divergence ${\mathsf D}$, we define at each $t$ a PLF-CSA as a stagewise conditional perception constrained by perceptual level $P_t\in[0,P^{\max}_{t}]\subseteq[0,\infty)$ as follows
\begin{align} 
\Pi_t
\triangleq
\mathbb E_{\hat X^{t-1}}
\Big[
\mathsf D\big(
P_{X_t\mid\hat X^{t-1}},
P_{\hat X_t\mid\hat X^{t-1}}
\big)
\Big]\le P_t,~t\in\mathbb{N}_0^n.
\label{eq:stagewise_perception}
\end{align}
\par 
\noindent PLF-CSA and PLF-SA coincide for $f$-divergences \cite{sason:2018}, since the joint distributions share the same history marginal, but not in general. For example, assuming squared Wasserstein-$2$ distance, PLF-CSA restricts transport to a common reconstruction history, whereas PLF-SA permits cross-history transport; hence, PLF-SA $\leq$ PLF-CSA and the inequality may be strict. Nonetheless, both criteria have the same perfect-realism condition.\\
\textbf{Operational and information-theoretic sum-rates.} Let \({\bf D}\triangleq(D_0,\allowbreak\ldots,D_n)\) and \({\bf P}\triangleq(P_0,\ldots,P_n)\). The operational one-shot minimum achievable sum-rate under common randomness is defined as 
\begin{align} 
R_{[0,n]}^{\mathrm{op,cr}}({\bf D},{\bf P}) \triangleq \inf_{\{\substack{\{(f_t,g_t)\}_{t=0}^{n}:\ \eqref{eq:pointwise-distortion}, \eqref{eq:stagewise_perception}}\}} \frac{1}{n+1} \sum_{t=0}^{n}\mathbb E[\ell(M_t)]
\label{eq:operational_sum_rate} 
\end{align} 
where the infimum is over all causal conditionally prefix-free codes satisfying \eqref{eq:pointwise-distortion} and \eqref{eq:stagewise_perception}.
\par Let \(\mathcal Q_{[0,n]}({\bf D},{\bf P})\) denote the set of causal reproduction distributions such that
\begin{equation}
\begin{aligned} 
&\mathcal Q_{[0,n]}({\bf D},{\bf P})\triangleq
\Big\{Q_{\hat X^n\Vert X^n} \triangleq \prod_{t=0}^{n} Q_{\hat X_t\mid X^t,\hat X^{t-1}}:\\
&\qquad\mathbb E_Q[d(X_t,\hat X_t)]\le D_t,~~\Pi_t(Q)\le P_t,~ t\in\mathbb{N}_0^n\Big\}.\label{eq:constraint_set} 
\end{aligned} 
\end{equation}
All expectations and distributions in \eqref{eq:constraint_set} are evaluated under the joint distribution \(P_{X^n}Q_{\hat X^n\Vert X^n}\). 
\par We define the NRDPF as 
\begin{align} 
R_{[0,n]}^{\mathrm{na}}({\bf D},{\bf P}) \triangleq \inf_{Q\in\mathcal Q_{[0,n]}({\bf D},{\bf P})} \frac{1}{n+1} I_Q(X^n\rightarrow\hat X^n)\label{eq:nonanticipative_rdp} 
\end{align} 
where $I_Q(X^n\rightarrow\hat X^n)$ is directed information \cite{massey:1990} defined as
\begin{align} 
I_Q(X^n\rightarrow\hat X^n) \triangleq \sum_{t=0}^{n} I_Q(X^t;\hat X_t\mid\hat X^{t-1}). \label{eq:directed_information} \end{align} 
For \(t=0\), the history \(\hat X^{-1}\) is empty, so the corresponding term in \eqref{eq:directed_information} is \(I_Q(X_0;\hat X_0)\).
\begin{remark}[Relation to the NRDF] The quantity in \eqref{eq:nonanticipative_rdp} generalizes the classical nonanticipative, also referred to as sequential, RDF \cite{stavrou:2021tac2,gorbunov:1973,tatikonda:2004} by incorporating stagewise perceptual-fidelity constraints. The classical nonanticipative RDF is recovered when these perception constraints are removed or are inactive. 
\end{remark}
\noindent The following structural property known for first-order Markov sources under single-letter distortion, see e.g. \cite{stavrou:2018siam}, extends directly to the present distribution-dependent causal
perception constraints because Markovization preserves all
stagewise distributions entering the two fidelity criteria.
\begin{proposition}[Markov reproduction distributions]
\label{prop:markov_reproduction_kernel}
Suppose that \(X^n\) is a first-order Markov source subject to the fidelity constraints \eqref{eq:pointwise-distortion} and \eqref{eq:stagewise_perception}. Then,
without loss of optimality, the infimum in
\eqref{eq:nonanticipative_rdp} can be restricted to causal
reproduction distributions of the form
\begin{align}
Q^{\mathrm{MC}}_{\hat X^n\Vert X^n}
=
\prod_{t=0}^{n}
Q^{\mathrm{MC}}_{\hat X_t\mid X_t,\hat X^{t-1}}.
\label{eq:markov_reproduction_kernel}
\end{align}
Consequently, \eqref{eq:nonanticipative_rdp} simplifies to
\begin{align}
R_{[0,n]}^{\mathrm{na}}({\bf D},{\bf P})
= \inf_{\substack{
Q^{\mathrm{MC}}\in\mathcal Q^{\mathrm{MC}}_{[0,n]}({\bf D},{\bf P})}}\frac{1}{n+1}
I_{Q^{\mathrm{MC}}}(X^n\rightarrow\hat X^n)
\label{eq:markov_nonanticipative_rdp}
\end{align}
where $I_{Q^{\mathrm{MC}}}(X^n\rightarrow\hat X^n) \triangleq \sum_{t=0}^{n} I_{Q^{\mathrm{MC}}}(X_t;\hat X_t\mid\hat X^{t-1})$ and
\begin{equation}
\begin{aligned} 
&\mathcal Q^{\mathrm{MC}}_{[0,n]}({\bf D},{\bf P})\triangleq
\Big\{\{Q^{\mathrm{MC}}_{\hat X^n\Vert X^n} \triangleq \prod_{t=0}^{n} Q^{\mathrm{MC}}_{\hat X_t\mid X_t, \hat X^{t-1}}:\\
&\qquad\mathbb E_{Q^{\mathrm{MC}}}[d(X_t,\hat X_t)]\le D_t,~~\Pi_t(Q^{\mathrm{MC}})\le P_t, ~t\in\mathbb{N}_0^n\Big\}.\nonumber
\end{aligned}
\end{equation}
\end{proposition}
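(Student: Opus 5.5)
Given any feasible $Q=\prod_{t=0}^n Q_{\hat X_t\mid X^t,\hat X^{t-1}}\in\mathcal Q_{[0,n]}({\bf D},{\bf P})$, I will construct a Markovized kernel $Q^{\mathrm{MC}}$ with the same feasibility and no larger directed information. The construction is recursive. Let $P^Q$ denote the joint law $P_{X^n}Q_{\hat X^n\Vert X^n}$. For each $t$, define $Q^{\mathrm{MC}}_{\hat X_t\mid X_t,\hat X^{t-1}}\triangleq P^Q_{\hat X_t\mid X_t,\hat X^{t-1}}$, the conditional obtained from $P^Q$ by averaging out $X^{t-1}$.

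The key claim, proved by induction on $t$, is that the joint law of $(X_t,\hat X^{t})$ is the same under $P^{\mathrm{MC}}\triangleq P_{X^n}Q^{\mathrm{MC}}$ as under $P^Q$.
\begin{itemize}
\item For $t=0$ the claim is trivial, since the two kernels coincide.
\item For the inductive step, assume $P^{\mathrm{MC}}_{X_{t-1},\hat X^{t-1}}=P^Q_{X_{t-1},\hat X^{t-1}}$.
\item Under both laws $X_t$ depends on $(X_{t-1},\hat X^{t-1})$ only through $P_{X_t\mid X_{t-1}}$. This is the Markov property of the source together with causality: $\hat X^{t-1}$ is generated from $X^{t-1}$ and kernel randomness, so $X_t\perp \hat X^{t-1}\mid X^{t-1}$ under $P^Q$, and then $X_t\perp (X^{t-2},\hat X^{t-1})\mid X_{t-1}$ by \eqref{first_order_markov}.
\item Hence $P_{X_{t-1},X_t,\hat X^{t-1}}$ agree, and so do the marginals $P_{X_t,\hat X^{t-1}}$.
\item Appending $\hat X_t$ via the kernel defined as the $P^Q$-conditional then gives equality of $P_{X_t,\hat X^t}$.
\end{itemize}

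Feasibility follows at once. The distortion $\mathbb E[d(X_t,\hat X_t)]$ depends only on $P_{X_t,\hat X_t}$. The perception term $\Pi_t$ depends only on $P_{X_t,\hat X^{t-1}}$ (which gives $P_{X_t\mid\hat X^{t-1}}$ and the law of $\hat X^{t-1}$) and on $P_{\hat X_t,\hat X^{t-1}}$. All of these are marginals of $P_{X_t,\hat X^t}$, so they are preserved, and $Q^{\mathrm{MC}}\in\mathcal Q^{\mathrm{MC}}_{[0,n]}({\bf D},{\bf P})$. This is the point stressed before the proposition: the causal conditional perception criterion involves only stagewise distributions, whatever the divergence $\mathsf D$.

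For the rate, expand each term by the chain rule:
\begin{align*}
I_Q(X^t;\hat X_t\mid\hat X^{t-1})=I_Q(X_t;\hat X_t\mid\hat X^{t-1})+I_Q(X^{t-1};\hat X_t\mid X_t,\hat X^{t-1})\ge I_Q(X_t;\hat X_t\mid\hat X^{t-1}).
\end{align*}
The right-hand side depends only on $P^Q_{X_t,\hat X^t}$, so it equals $I_{Q^{\mathrm{MC}}}(X_t;\hat X_t\mid\hat X^{t-1})$. Under $Q^{\mathrm{MC}}$ the extra term vanishes, because $\hat X_t$ is conditionally independent of $X^{t-1}$ given $(X_t,\hat X^{t-1})$. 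Summing over $t$ gives $I_{Q^{\mathrm{MC}}}(X^n\to\hat X^n)\le I_Q(X^n\to\hat X^n)$, and the same computation justifies the stated simplified form of the directed information. Taking the infimum yields \eqref{eq:markov_nonanticipative_rdp}; the reverse inequality is trivial because $\mathcal Q^{\mathrm{MC}}\subseteq\mathcal Q$.

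The main obstacle is the inductive marginal-preservation step. One must check carefully that $X_t\perp(X^{t-2},\hat X^{t-1})\mid X_{t-1}$ holds under both laws. This requires that the causal kernels do not look ahead, which is built into the product form, together with the Markov factorization. Measure-theoretic issues (existence of regular conditional kernels on $\mathbb R^p$) are routine since the spaces are Polish.
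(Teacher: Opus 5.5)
Your proof is correct and follows the same route as the paper. The paper justifies the proposition only by noting that Markovization, i.e., replacing each kernel by the induced conditional $P^Q_{\hat X_t\mid X_t,\hat X^{t-1}}$, preserves every stagewise law entering the distortion and conditional-perception criteria, and by citing the standard single-letter-distortion argument of \cite{stavrou:2018siam}; your induction on $P_{X_t,\hat X^t}$, which uses the Markov property and causality, together with your chain-rule comparison of the directed-information terms, supplies exactly the details that sketch leaves out.
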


\section{One-Shot Achievability}\label{sec:main_results}

In this section, we state our achievability result. 
This relates the operational minimum-sum-rate to the NRDPF. 

\begin{theorem}[Operational bounds for first-order Markov sources]
\label{thm:operational_bounds_markov}
Suppose that \(X^n\) is a first-order Markov source,
$\mathcal Q_{[0,n]}^{\mathrm{MC}}(\mathbf D,\mathbf P)$ is nonempty, and
\(R_{[0,n]}^{\mathrm{na}}(\mathbf D,\mathbf P)<\infty\) in \eqref{eq:markov_nonanticipative_rdp}.
Assume that the encoder and decoder have access to unlimited
common randomness, independent of \(X^n\). Then
\begin{align}
&R_{[0,n]}^{\mathrm{na}}(\mathbf D,\mathbf P)
\le
R_{[0,n]}^{\mathrm{op,cr}}(\mathbf D,\mathbf P)
\label{eq:theorem_converse}\\
&\qquad\quad\le
R_{[0,n]}^{\mathrm{na}}(\mathbf D,\mathbf P)
+\log_2\!\Bigl(
R_{[0,n]}^{\mathrm{na}}(\mathbf D,\mathbf P)+3.4
\Bigr)+2
\label{eq:theorem_logarithmic_gap}
\end{align}
where $R_{[0,n]}^{\mathrm{na}}(\mathbf D, \mathbf P)$ is given by \eqref{eq:markov_nonanticipative_rdp}.
\end{theorem}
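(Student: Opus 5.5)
The plan has two halves: a converse giving \eqref{eq:theorem_converse} and a sequential one-shot achievability giving \eqref{eq:theorem_logarithmic_gap}.

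\emph{Converse.} Take any causal, conditionally prefix-free code with common randomness $K$ that meets \eqref{eq:pointwise-distortion} and \eqref{eq:stagewise_perception}. Because the codebook for $M_t$ is prefix-free given $(M^{t-1},K)$, Kraft's inequality gives $\mathbb E[\ell(M_t)]\ge H(M_t\mid M^{t-1},K)\ge I(X^n;M_t\mid M^{t-1},K)$. Summing over $t$ and using the chain rule, the total length is at least $I(X^n;M^n\mid K)$. I would then lower bound $I(X^n;M^n\mid K)$ by $I(X^n\rightarrow\hat X^n)$:
\begin{align*}
\sum_t I(X^n;M_t\mid M^{t-1},K)\ge\sum_t I(X^t;M_t,\hat X_t\mid M^{t-1},K).
\end{align*}
This step uses two facts: $\hat X_t$ is a function of $(M^t,K)$ (plus private decoder randomness independent of everything else), and $K\perp X^n$. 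Next, since $\hat X^{t-1}$ is a function of $(M^{t-1},K)$, I would apply the standard directed-information data-processing argument of the classical NRDF converse \cite{tatikonda:2004,stavrou:2021tac2}. It bounds $I(X^n;M^n\mid K)\ge\sum_t I(X^t;\hat X_t\mid\hat X^{t-1})$, using the causal ordering (the encoder at $t$ sees only $X^t$).

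The induced kernel $\prod_t P_{\hat X_t\mid X^t,\hat X^{t-1}}$ is feasible, because the constraints \eqref{eq:pointwise-distortion} and \eqref{eq:stagewise_perception} depend only on the joint law of $(X^n,\hat X^n)$. Hence the bound is at least $(n+1)R^{\mathrm{na}}_{[0,n]}$, and Proposition~\ref{prop:markov_reproduction_kernel} identifies this with \eqref{eq:markov_nonanticipative_rdp}. The delicate point is that the induced joint law need not be exactly of the causal form, since $\hat X_t$ may depend on future $X$ through nothing, but its kernel given $(X^n,\hat X^{t-1})$ must reduce to one given $(X^t,\hat X^{t-1})$. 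I would verify this from the sequential encoding structure: $M_t$ depends only on $(X^t,K)$.

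\emph{Achievability.} Fix a near-optimal Markov kernel $Q^{\mathrm{MC}}$ from Proposition~\ref{prop:markov_reproduction_kernel}. At each stage $t$, conditionally on the realized past $\hat X^{t-1}$, which both encoder and decoder know, I would apply the strengthened SFRL \cite{li:2026} to the pair $(X_t,\hat X_t)$ under $Q^{\mathrm{MC}}_{\hat X_t\mid X_t,\hat X^{t-1}}$. Common randomness $K_t$ serves as the shared random codebook, so that $\hat X_t=g(Z_t,K_t)$ with $Z_t$ a function of $(X_t,K_t,\hat X^{t-1})$. The lemma gives
\begin{align*}
H(Z_t\mid K_t,\hat X^{t-1})\le I(X_t;\hat X_t\mid\hat X^{t-1})+\log_2(I+c)+c'.
\end{align*}
$Z_t$ is then encoded with a conditional prefix-free code (for example an Elias-type or Huffman code given the history), at a cost of at most one extra bit. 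The joint law of $(X^n,\hat X^n)$ is then exactly $P_{X^n}Q^{\mathrm{MC}}$, so the distortion and perception constraints hold exactly.

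\emph{Main obstacle.} The main difficulty is aggregating the per-stage overheads into a single $\log_2(R+3.4)+2$ rather than $n+1$ separate logarithms. My plan is to use the strengthened SFRL applied jointly, or equivalently concavity: Jensen gives $\frac1{n+1}\sum_t\log_2(I_t+c)\le\log_2(\frac1{n+1}\sum_t I_t+c)$. The constants $3.4$ and $2$ should then come from the specific form in \cite{li:2026} plus the prefix-coding bit. Finally, an $\epsilon$-argument over near-optimal kernels and monotonicity of $x\mapsto x+\log_2(x+3.4)$ finish the bound.
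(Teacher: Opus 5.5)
Your proposal is correct and follows essentially the same route as the paper. The converse uses Kraft, $K\perp X^n$ and the causal (no-feedback) structure. The achievability uses a per-stage conditional SFRL with fresh common randomness, conditional prefix coding, exact reproduction of $P_{X^n}Q^{\mathrm{MC}}$, and Jensen to merge the logarithmic overheads.

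The differences are cosmetic:
\begin{itemize}
\item In the converse, the paper argues more directly via $I(X^n;M^n,K)\ge I(X^n;\hat X^n)=I(X^n\rightarrow\hat X^n)$, using $I(\hat X^{t-1};X_t\mid X^{t-1})=0$. You instead route through termwise terms and an appeal to the standard argument.
\item In the achievability, the paper Shannon-codes $\hat X_t$ given $(\hat X^{t-1},Z_t)$ rather than the representation index, because $H(\hat X_t\mid\hat X^{t-1},Z_t)$ is the quantity the strengthened SFRL is invoked to bound.
\end{itemize}
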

\begin{IEEEproof}[Proof sketch]
{\bf Converse.} Consider any causal code satisfying
the distortion and perception constraints. Since the codebook at
time \(t\), conditioned on \((M^{t-1},K)\), is prefix-free, the
source-coding inequality \cite[Ch.~5]{cover-thomas:2006} gives
\begin{align}
\sum_{t=0}^{n}\mathbb E[\ell(M_t)]
&\ge
\sum_{t=0}^{n}H(M_t\mid M^{t-1},K) \label{eq:proof_converse_entropy}\\
&\ge
I(X^n;M^n\mid K)\nonumber\\
&=I(X^n;M^n,K)\label{eq:converse_ineq:1}\\
&\ge I(X^n;\hat X^n)\label{eq:converse_ineq:2}\\
&=I(X^n\rightarrow\hat X^n)\label{eq:converse_ineq:3}
\end{align}
where \eqref{eq:converse_ineq:1} follows because \(K\perp X^n\), \eqref{eq:converse_ineq:2} follows from data processing because the decoder generates \(\hat X^n\) from \((M^n,K)\), and \eqref{eq:converse_ineq:3} follows because source is exogenous, hence $I(\hat X^{t-1};X_t\mid X^{t-1})=0$,~$t\in\mathbb{N}_0^n$ and by the conservation law for directed information
\cite{massey:1995}. \\
The code induces a feasible causal reproduction distribution, and thus
\begin{align*}
I(X^n\rightarrow\hat X^n)
\ge
(n+1)R_{[0,n]}^{\mathrm{na}}({\bf D},{\bf P}).
\end{align*}
Combining this inequality with
\eqref{eq:proof_converse_entropy}-\eqref{eq:converse_ineq:3},
dividing by \(n+1\), and taking the infimum over all feasible
causal prefix-free codes yields \eqref{eq:theorem_converse}
where Proposition \ref{prop:markov_reproduction_kernel} ensures that  $R_{[0,n]}^{\mathrm{na}}({\bf D},{\bf P})$ is given by \eqref{eq:markov_nonanticipative_rdp} and the infimum can be restricted to the conditional distributions of the form in \eqref{eq:markov_reproduction_kernel}.\\
\textbf{Achievability.} For any \(\epsilon>0\), select an \(\epsilon\)-optimal feasible
reconstruction distribution \(Q^{\mathrm{MC}}_\epsilon(\cdot||\cdot)\) satisfying
\begin{align}
\frac{1}{n+1}\sum_{t=0}^{n} I_t(Q^{\mathrm{MC}}_\epsilon)
\le
R_{[0,n]}^{\mathrm{na}}({\bf D},{\bf P})+\epsilon
\label{eq:proof_epsilon_optimal}
\end{align}
where $I_t(Q^{\mathrm{MC}}_\epsilon)
\triangleq
I_{Q^{\mathrm{MC}}_\epsilon}(X_t;\hat X_t\mid\hat X^{t-1}),~\text{for any $t$}$.

At each time \(t\), apply the strengthened conditional SFRL \cite[Theorem 14]{li:2026} to $(X,Y,U)=(X_t,\hat X_t,\hat X^{t-1})$. It provides a random variable \(Z_t\), independent of
\((X_t,\hat X^{t-1})\), and a measurable function \(g_t\) such
that
\begin{align}
\hat X_t
=
g_t(X_t,\hat X^{t-1},Z_t)
\label{eq:proof_sfrl_representation}
\end{align}
has conditional distribution
\(Q^{\mathrm{MC}}_{\epsilon,\hat X_t\mid X_t,\hat X^{t-1}}\), and
\begin{equation}
H(\hat X_t \!\mid\!\hat X^{t-1}\!,Z_t)
\le
I_t(Q^{\mathrm{MC}}_\epsilon)
+\log_2\!\bigl(I_t(Q^{\mathrm{MC}}_\epsilon)+3.4\bigr)+1.
\label{eq:proof_sfrl_entropy}
\end{equation}

Choose \(Z_0,\ldots,Z_n\) successively as fresh random variables
independent of the source and of the previously generated
auxiliary variables, and include \(Z^n\) in the common randomness
\(K\). At time \(t\), both terminals know
\((\hat X^{t-1},Z_t)\). The encoder observes \(X_t\), and computes
\(\hat X_t\) using \eqref{eq:proof_sfrl_representation}, and encodes \(\hat X_t\) into the message $M_t$ using a conditional Shannon code (prefix-free code) to match \(P_{\hat X_t\mid\hat X^{t-1},Z_t}\). Given $(M_t,\hat X^{t-1},Z_t)$, the decoder recovers $\hat X_t$. The Shannon coding bound
\cite[Ch.~5]{cover-thomas:2006} gives
\begin{align}
\mathbb E[\ell(M_t)]
&\le
H(\hat X_t\mid\hat X^{t-1},Z_t)+1 \notag\\
&\stackrel{\eqref{eq:proof_sfrl_entropy}}\le
I_t(Q^{\mathrm{MC}}_\epsilon)
+\log_2\!\bigl(I_t(Q^{\mathrm{MC}}_\epsilon)+3.4\bigr)+2.
\label{eq:proof_stagewise_rate}
\end{align}
The auxiliary variable \(Z_t\) need not be communicated because
it is available to both terminals as common randomness.

A forward induction on \(t\) shows that this construction induces
exactly $P_{X^n}\prod_{t=0}^{n}
Q^{\mathrm{MC}}_{\epsilon,\hat X_t\mid X_t,\hat X^{t-1}}$. Indeed, assuming that the target distribution has been generated up to
time \(t-1\), the representation
\eqref{eq:proof_sfrl_representation} generates \(\hat X_t\)
according to the target conditional distribution. Thus, the complete
joint distribution is preserved. In particular, all distortion
and perception constraints, including the nonlinear conditional
perception constraints, are satisfied exactly.

Taking the infimum and averaging \eqref{eq:proof_stagewise_rate} gives the refined bound
\begin{align}
&R_{[0,n]}^{\mathrm{op,cr}}({\bf D},{\bf P})\le
\inf_{Q^{\mathrm{MC}}\in{\mathcal Q}_{[0,n]}^{\mathrm{MC}}(D,P)}
\frac{1}{n+1}\sum_{t=0}^{n}
\medleftparen I_t(Q^{\mathrm{MC}})\nonumber\\
&\qquad\qquad\qquad\qquad\qquad\qquad+\log_2\!\bigl(I_t(Q^{\mathrm{MC}})+3.4\bigr)+2
\medrightparen.\nonumber
\end{align}
Finally, concavity of the logarithm that allows the use of Jensen's inequality \cite{cover-thomas:2006} implies
\begin{align}
&\frac{1}{n+1}\sum_{t=0}^{n}
\log_2 \medleftparen I_t(Q^{\mathrm{MC}}_\epsilon)+3.4\medrightparen \le
\log_2 \medleftparen
\frac{1}{n+1}\sum_{t=0}^{n}I_t(Q^{\mathrm{MC}}_\epsilon)\nonumber\\
&\qquad\qquad\qquad+3.4\medrightparen \stackrel{\eqref{eq:proof_epsilon_optimal}}\le
\log_2\!\left(
R_{[0,n]}^{\mathrm{na}}(D,P)+\epsilon+3.4
\right).
\end{align}
Substituting this bound into
\eqref{eq:proof_stagewise_rate}, and
letting \(\epsilon\downarrow0\) proves
\eqref{eq:theorem_logarithmic_gap}. 
\end{IEEEproof}
\begin{remark}[Relation to existing achievability bounds]\label{remark:2}
The bound of \cite{li:2026}, used in Theorem~\ref{thm:operational_bounds_markov}, is tighter than standard SFRL of \cite{li:2018} used in \cite{sadaf:2023,sadaf:2025}. Unlike \cite{atay:2026}, which uses binary time-sharing, our construction reproduces the target causal distribution exactly and satisfies all stagewise constraints, eliminating the extra $\frac{1}{n}$ redundancy term.
\end{remark}

\section{Case Study: Jointly Gaussian Processes}
\label{sec:joint_gaussian}
Consider the scalar-valued Gauss--Markov source
\begin{align}
X_{t+1}=\alpha_tX_t+W_t, ~~t\in\mathbb{N}_0^{n-1}
\label{eq:GM}
\end{align}
where $\alpha_t\in\mathbb{R}$ is known, $X_0\sim\mathcal N(0,\sigma^2_{X_0})$ with $\sigma^2_{X_0}>0$, and $W_t\in\mathbb{R}\sim\mathcal N(0,\sigma^2_{W_t})$, is an independent white Gaussian noise process with $\sigma^2_W>0$. We specialize the distortion of \eqref{eq:pointwise-distortion} to squared error, i.e., $d(x_t,\hat{x_t})=({x}_t-\hat{x}_t)^2$, and the perception \eqref{eq:stagewise_perception} to conditional squared Wasserstein distance, i.e., $\mathsf D\big(
P_{X_t\mid\hat X^{t-1}},
P_{\hat X_t\mid\hat X^{t-1}}
\big)=W_2^2\big(
P_{X_t\mid\hat X^{t-1}},
P_{\hat X_t\mid\hat X^{t-1}}
\big)$.
\par In what follows, we establish Gaussian optimality of the Gaussian source-reconstruction processes. 
\begin{proposition}[Gaussian optimality]
\label{prop:joint_gaussianity}
For the source model of \eqref{eq:GM}, the infimum in
\eqref{eq:markov_nonanticipative_rdp} is unchanged when restricted to
source--reconstruction processes that are jointly
Gaussian.
\end{proposition}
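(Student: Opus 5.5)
Starting from any feasible Markov kernel $Q^{\mathrm{MC}}\in\mathcal Q^{\mathrm{MC}}_{[0,n]}(\mathbf D,\mathbf P)$ with finite directed information, we will build a jointly Gaussian causal kernel $Q^{G}$ that is still feasible and has no larger value of $\sum_t I(X_t;\hat X_t\mid\hat X^{t-1})$. We will take $Q^G$ to match the first and second moments of $(X^n,\hat X^n)$ under $P_{X^n}Q^{\mathrm{MC}}$. Concretely, let $(X^n,\hat X^n_G)$ be jointly Gaussian with the same mean vector and covariance matrix as $(X^n,\hat X^n)$. The $X^n$-marginal is unchanged because the source is Gaussian. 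The causal/Markov structure is preserved at the level of linear estimation, and we will arrange it exactly by building $Q^G$ recursively: $\hat X_{G,t}=A_t X_t+\sum_{s<t}B_{t,s}\hat X_{G,s}+V_t$ with $V_t$ independent Gaussian. The coefficients are the linear MMSE coefficients of $\hat X_t$ on $(X_t,\hat X^{t-1})$ under the original law, and $\mathrm{Var}(V_t)$ is the corresponding error variance. A forward induction will show that this reproduces the joint covariance of $(X_t,\hat X^t)$ (together with the source covariances), which is all we need below.

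\textbf{Objective.} For each $t$ we will use the standard Gaussian maximum-entropy argument: $I(X_t;\hat X_t\mid\hat X^{t-1})=h(X_t\mid\hat X^{t-1})-h(X_t\mid\hat X^{t})$. We will also use the decomposition $\sum_t[h(X_t\mid\hat X^{t-1})-h(X_t\mid \hat X^t)]$, telescoped via $h(X_t\mid\hat X^{t-1})\ge h(X_t\mid X_{t-1},\hat X^{t-1})=h(W_{t-1})$. Rather than bound each term separately, we will write the directed information as $I(X^n;\hat X^n)$ (as in the converse of Theorem~\ref{thm:operational_bounds_markov}) and then as $h(X^n)-h(X^n\mid\hat X^n)$. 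Here $h(X^n)$ is fixed, and $h(X^n\mid\hat X^n)\le h(X^n_G\mid\hat X^n_G)$ because conditional entropy given second moments is maximized by the Gaussian. Finally, the directed information under $Q^G$ equals $I(X^n;\hat X^n_G)$, because $Q^G$ is causal with an exogenous source (the conservation law again). This gives $I_{Q^G}\le I_{Q^{\mathrm{MC}}}$.

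\textbf{Constraints.} The MSE $\mathbb E(X_t-\hat X_t)^2$ depends only on second moments, so it is preserved. The perception constraint is the main obstacle, since the conditional laws $P_{X_t\mid\hat X^{t-1}}$ and $P_{\hat X_t\mid\hat X^{t-1}}$ of the original are not Gaussian. We plan to bound $\Pi_t$ from below by a second-moment functional that equals $\Pi_t$ in the Gaussian case. By Gelbrich's bound, $W_2^2(\mu,\nu)\ge (m_\mu-m_\nu)^2+(s_\mu-s_\nu)^2$ for scalar laws with means $m$ and standard deviations $s$. Taking expectation over $\hat X^{t-1}$ then requires relating $\mathbb E[(s_\mu-s_\nu)^2]$ to unconditional linear-estimation quantities. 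For this we will use Jensen/Cauchy--Schwarz: the conditional variances average to at least the linear-MMSE residual variances, and $\mathbb E[s_\mu s_\nu]\le\sqrt{\mathbb E s_\mu^2\,\mathbb E s_\nu^2}$. This should give
\[
\Pi_t\ \ge\ \mathbb E\bigl[(\mathbb E[X_t-\hat X_t\mid \hat X^{t-1}])^2\bigr]+\bigl(\sqrt{\sigma^2_{X_t|\hat X^{t-1}}}-\sqrt{\sigma^2_{\hat X_t|\hat X^{t-1}}}\bigr)^2 .
\]
If this bound does not come out cleanly in the fully general case, the fallback is to replace conditional means by linear predictors. We must then check that the first term also only decreases when conditional means are replaced by linear predictors, and this is where the argument may need care. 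Under $Q^G$, the conditional laws are Gaussian with constant variances and affine means, so the right-hand side equals $\Pi_t(Q^G)$ exactly, giving $\Pi_t(Q^G)\le\Pi_t(Q^{\mathrm{MC}})\le P_t$. Hence $Q^G$ is feasible, and together with the objective bound from the previous step this proves the proposition.
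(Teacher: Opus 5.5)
\textbf{There is a genuine gap in the perception step and a smaller one in the objective step.}

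Your overall plan is sound: Gaussianize, keep the MSE, compare the objective by maximum entropy, and lower-bound $\Pi_t$ by a second-moment functional. But the perception step, which you rightly call the crux, fails as planned. First, the direction you invoke is reversed. $\mathbb E[\operatorname{var}(X_t\mid\hat X^{t-1})]$ is the nonlinear MMSE, which is \emph{at most} the linear-MMSE residual variance $\lambda_t$ appearing in $\Pi_t(Q^G)$; likewise for $\hat X_t$ and $I_t$. Second, no termwise comparison can work, because $(a,b)\mapsto(\sqrt a-\sqrt b)^2$ is not monotone. In fact, if the $\sigma^2$'s in your display are the linear residual variances, the display is false. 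Take $X_t,\hat X_t$ conditionally Gaussian given $H=\hat X^{t-1}$ with equal constant variances $a$, with $\mathbb E[\hat X_t\mid H]$ affine and $\mathbb E[X_t\mid H]$ having a nonaffine part $\tilde m_X$ with $\mathbb E[\tilde m_X^2]=c>0$. Then $\Pi_t$ equals your first term exactly, while your second term is $(\sqrt{a+c}-\sqrt a)^2>0$. Your fallback of using linear predictors does not help either: the first term drops by exactly $\mathbb E[(\tilde m_X-\tilde m_{\hat X})^2]$ while the variances grow.

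The missing idea is to transfer that slack into the variance term. Let $s_X,s_{\hat X}$ be the conditional standard deviations, $\tilde m_X,\tilde m_{\hat X}$ the nonaffine parts of the conditional means, and $L$ the best affine predictor. Orthogonality gives $\mathbb E[(\mathbb E[X_t-\hat X_t\mid H])^2]=\mathbb E[(L(X_t-\hat X_t\mid H))^2]+\mathbb E[(\tilde m_X-\tilde m_{\hat X})^2]$. Cauchy--Schwarz applied jointly to the pairs $(s_X,\tilde m_X)$ and $(s_{\hat X},\tilde m_{\hat X})$ then gives $\mathbb E[(\tilde m_X-\tilde m_{\hat X})^2]+\mathbb E[(s_X-s_{\hat X})^2]\ge(\sqrt{\lambda_t}-\sqrt{I_t})^2$, because $\mathbb E[s_X^2+\tilde m_X^2]=\lambda_t$ and $\mathbb E[s_{\hat X}^2+\tilde m_{\hat X}^2]=I_t$. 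With Gelbrich this yields $\Pi_t\ge\Pi_t(Q^G)$. The paper avoids this bookkeeping. It takes an $\epsilon$-optimal conditional coupling $(U_t,V_t)$ given $\hat X^{t-1}$ and Gaussianizes the triple $(\hat X^{t-1},U_t,V_t)$ with the same covariance. This is a coupling of the Gaussian conditionals with the same quadratic cost, and the argument is not tied to the scalar case.

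The objective step has a separate gap. Your recursive $Q^G$ reproduces only the blocks $\operatorname{Cov}(X_t,\hat X^t)$, not the full covariance of $(X^n,\hat X^n)$. The residual of $\hat X_t$ after linear regression on $(X_t,\hat X^{t-1})$ can correlate with $X_{t-1}$ when $\mathbb E[X_{t-1}\mid X_t,\hat X^{t-1}]$ is nonaffine. For example, take $\hat X_0=\operatorname{sign}(X_0)$ and $\hat X_1=\mathbb E[X_0\mid X_1,\hat X_0]$ plus independent noise, which is a legitimate Markov kernel. Hence the inequality $h(X^n\mid\hat X^n)\le h(X^n_G\mid\hat X^n_G)$, which needs the full covariance, is not justified for your $Q^G$, and ``all we need below'' is false here.

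The conclusion survives with one extra step. Apply maximum entropy to the fully moment-matched Gaussian $G$, which is causal because $X_{t+1}-\alpha_tX_t$ is uncorrelated with, hence independent of, $(X^t,\hat X^t)$. This gives $I_G(X^n;\hat X^n)\le I(X^n;\hat X^n)$. Then $\sum_t I_G(X_t;\hat X_t\mid\hat X^{t-1})\le\sum_t I_G(X^t;\hat X_t\mid\hat X^{t-1})=I_G(X^n;\hat X^n)$. The left side depends only on the $(X_t,\hat X^t)$ blocks, so it equals the objective of your Markov $Q^G$.
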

\begin{IEEEproof}[Proof sketch]
The standard Gaussian-replacement argument
\cite[Sec.~III-C]{ma:2011} preserves the Gaussian source distribution, the
causal structure, and the MSE distortions, while directed information
cannot increase. It therefore remains only to verify the perception
constraints. Let $H_t=\hat X^{t-1}$ and, for any $\epsilon>0$, select an
$\epsilon$-optimal conditional coupling $(U_t,V_t)\mid H_t$ of
$P_{X_t\mid H_t}$ and $P_{\hat X_t\mid H_t}$. Gaussianizing
$(H_t,U_t,V_t)$ while preserving its mean and covariance gives a
conditional coupling $(U_t^{G},V_t^{G})\mid H_t^{G}$ of the
corresponding Gaussian conditional distributions. Consequently, $\Pi_t^{G}
\leq
\mathbb E\|U_t^{G}-V_t^{G}\|^2=
\mathbb E\|U_t-V_t\|^2
\leq \Pi_t+\epsilon$,
where the equality follows because quadratic costs depend only on
first- and second-order moments. Letting $\epsilon\downarrow0$ gives
$\Pi_t^{G}\leq\Pi_t$ for every $t$. Thus, Gaussian replacement
preserves feasibility and cannot increase the objective.
\end{IEEEproof}

\noindent{\bf Scalar Gaussian innovations parametrization.} In view of Proposition~\ref{prop:joint_gaussianity}, it suffices
to consider jointly Gaussian source-reconstruction processes.
Moreover, by applying a history-dependent translation to the
reconstruction, we may assume without loss of optimality that
\begin{align}
\mathbb E[\hat X_t\mid\hat X^{t-1}]
=
\mathbb E[X_t\mid\hat X^{t-1}],
\qquad t\in\mathbb{N}_0^n.
\label{eq:scalar_conditional_mean_matching}
\end{align}
Indeed, this translation leaves the directed information unchanged
and cannot increase either the MSE or the conditional Wasserstein
perception loss.
Define the common conditional mean
$m_t\triangleq
\mathbb E[X_t\mid\hat X^{t-1}]
=
\mathbb E[\hat X_t\mid\hat X^{t-1}]$, the source prediction innovation $E_t
\triangleq X_t-m_t$, and the reconstruction innovation $\widehat E_t
\triangleq\hat X_t-m_t$. Their variances are denoted by $\lambda_t
\triangleq
\operatorname{var}(E_t)
=\operatorname{var}(X_t\mid\hat X^{t-1}),
\label{eq:scalar_prediction_error_variance} I_t\triangleq
\operatorname{var}(\widehat E_t)
= \operatorname{var}(\hat X_t\mid\hat X^{t-1})$,
and let the filtering error variance be denoted by $\delta_t\triangleq\operatorname{var}(X_t\mid\hat X^t)$. Consequently, $P_{X_t\mid\hat X^{t-1}}
=
\mathcal N(m_t,\lambda_t)$, $P_{\hat X_t\mid\hat X^{t-1}}=
\mathcal N(m_t,I_t)$. Moreover,
every nondegenerate causal Gaussian reproduction distribution $Q^{\mathrm{MC},\rm G}_{\hat{X}_t|X_t,\hat{X}^{t-1}}$ admits the innovations orthogonal realization
\begin{align}
\hat X_t
=
m_t+h_t(X_t-m_t)+V_t,~~ t\in\mathbb{N}_0^n
\label{eq:scalar_innovations_realization}
\end{align}
where $h_t\in\mathbb{R}$ and $V_t\sim\mathcal N(0,\sigma_{V_t}^2)$ is independent of
$(X_t,\hat X^{t-1})$. Regarding \eqref{eq:GM} as the state
equation and \eqref{eq:scalar_innovations_realization} as the
observation equation, the standard Kalman-filter equations
\cite{kailath:2000} give for each $t$
\begin{equation}
I_t=
h_t^2\lambda_t+\sigma_{V_t}^2, \delta_t
=
\lambda_t-\frac{h_t^2\lambda_t^2}{I_t},~\lambda_{t+1}
=
\alpha_t^2\delta_t+\sigma_{W_t}^2.
\label{eq:scalar_kalman_filter}
\end{equation}
Equivalently, selecting the nonnegative correlation between the
source and reconstruction innovations, the realization parameters
can be expressed as
\begin{align}
h_t=
\frac{\sqrt{I_t(\lambda_t-\delta_t)}}{\lambda_t},~\sigma_{V_t}^2=
\frac{I_t\delta_t}{\lambda_t}.
\label{eq:scalar_realization_parameters}
\end{align}
The degenerate case $I_t=0$ is obtained separately and necessarily
corresponds to $\delta_t=\lambda_t$.
\noindent Using Proposition \ref{prop:joint_gaussianity}, and the Gaussian innovations parametrization above, we obtain the following characterization of the scalar-valued Gaussian NRDPF.
\begin{theorem}[Scalar Gaussian characterization of the NRDPF]
\label{thm:scalar_gaussian_characterization}
Consider the time-varying scalar Gauss--Markov source in \eqref{eq:GM}
when \eqref{eq:pointwise-distortion} is the pointwise MSE distortion and when \eqref{eq:stagewise_perception} is the causal conditional squared
Wasserstein-$2$ perception criterion. Then for  $\mathbf D\in(0,\infty]^{n+1}$ and $\mathbf P\in[0,\infty)^{n+1}$, the Gaussian NRDPF is characterized by
\begin{align}
&R_{[0,n]}^{\mathrm{na}}
(\mathbf D,\mathbf P)
=
R_{[0,n]}^{\mathrm{na,G}}
(\mathbf D,\mathbf P) 
\nonumber\\
&\qquad=\inf_{\substack{
\lambda_t>0,\;
0<\delta_t\leq\lambda_t,\;
I_t\geq0\\
\delta_t+\left(\sqrt{\lambda_t-\delta_t}-\sqrt{I_t}
\right)^2\leq D_t\\
\left(
\sqrt{\lambda_t}
-
\sqrt{I_t}
\right)^2\leq
P_t, \
t\in\mathbb{N}_0^n
}}
\frac{1}{2(n+1)}
\sum_{t=0}^{n}
\log_2
\left(
\frac{\lambda_t}{\delta_t}
\right)
\label{eq:scalar_gaussian_nrdpf_characterization}
\end{align}
with $\lambda_{t+1}=\alpha_t^2\delta_t+\sigma_{W_t}^2$ and $\lambda_0=\sigma_{X_0}^2$.
If $I_t=0$, feasibility is understood to require
$\delta_t=\lambda_t$. Moreover, the characterization of \eqref{eq:scalar_gaussian_nrdpf_characterization} is achieved by the orthogonal realization in \eqref{eq:scalar_innovations_realization}.
\end{theorem}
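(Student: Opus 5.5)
The plan is to combine Proposition~\ref{prop:markov_reproduction_kernel} and Proposition~\ref{prop:joint_gaussianity} to reduce to jointly Gaussian Markov reproduction kernels. The conditional-mean normalization \eqref{eq:scalar_conditional_mean_matching} then lets us work with the innovations realization \eqref{eq:scalar_innovations_realization}. In that setting I would compute the objective and the two constraints explicitly in terms of $(\lambda_t,\delta_t,I_t)$ and show that the map from admissible realization parameters $(h_t,\sigma_{V_t}^2)$ to these triples is onto the feasible set of \eqref{eq:scalar_gaussian_nrdpf_characterization}.

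\textbf{Objective.} For jointly Gaussian processes, conditioning on $\hat X^{t-1}$ leaves conditional covariances deterministic. Hence
\[
I(X_t;\hat X_t\mid\hat X^{t-1})=h(X_t\mid\hat X^{t-1})-h(X_t\mid\hat X^{t})=\tfrac12\log_2(\lambda_t/\delta_t),
\]
and summing gives the objective. The recursion $\lambda_{t+1}=\alpha_t^2\delta_t+\sigma_{W_t}^2$ with $\lambda_0=\sigma_{X_0}^2$ is the Kalman update \eqref{eq:scalar_kalman_filter}. It uses that $W_t$ is independent of $(X_t,\hat X^t)$, which holds because the reproduction is causal.

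\textbf{Perception.} Both conditionals are $\mathcal N(m_t,\lambda_t)$ and $\mathcal N(m_t,I_t)$. The scalar Gaussian formula $W_2^2=(\mu_1-\mu_2)^2+(\sigma_1-\sigma_2)^2$ gives a constant fibrewise value $(\sqrt{\lambda_t}-\sqrt{I_t})^2$, which is therefore also $\Pi_t$.

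\textbf{Distortion.} Since $X_t-\hat X_t=(1-h_t)E_t-V_t$, we have $\mathbb E(X_t-\hat X_t)^2=(1-h_t)^2\lambda_t+\sigma_{V_t}^2$. Substituting \eqref{eq:scalar_realization_parameters}, the cross term is $-2\sqrt{I_t(\lambda_t-\delta_t)}$ and the remaining terms collect to $\lambda_t+I_t$. This yields exactly $\delta_t+(\sqrt{\lambda_t-\delta_t}-\sqrt{I_t})^2$.

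\textbf{Sign of $h_t$.} The only subtlety is the sign of $h_t$. A negative $h_t$ gives the same $\delta_t$, $I_t$ and $\lambda_{t+1}$, and therefore the same rate and perception, but a larger MSE. So restricting to $h_t\ge0$ is without loss of optimality, and this restriction is what produces the displayed constraint.

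\textbf{Surjectivity.} Conversely, take any triple with $0<\delta_t\le\lambda_t$ and $I_t\ge0$, excluding the degenerate case $I_t=0,\ \delta_t<\lambda_t$. Formula \eqref{eq:scalar_realization_parameters} defines valid $h_t$ and $\sigma_{V_t}^2\ge 0$ reproducing it through \eqref{eq:scalar_kalman_filter}: one checks $h_t^2\lambda_t+\sigma_{V_t}^2=I_t$ and $\lambda_t-h_t^2\lambda_t^2/I_t=\delta_t$. When $I_t=0$ we take $\hat X_t=m_t$, which forces $\delta_t=\lambda_t$. Conditions $\lambda_t>0$ and $\delta_t>0$ hold automatically because $\sigma_{X_0}^2,\sigma_{W_t}^2>0$. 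The boundary $\delta_t=0$ would make the rate infinite and can be excluded. Forward induction over $t$ then realizes any feasible sequence, so the infimum over Gaussian kernels equals the stated one, and it is attained by \eqref{eq:scalar_innovations_realization}.

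\textbf{Main obstacle.} I expect the delicate step to be justifying that the joint Gaussian law, after the history-dependent translation, is fully captured by the scalar triples. I would argue that conditional variances given $\hat X^{t-1}$ are deterministic in the Gaussian case, so every quantity, including the expectation in $\Pi_t$, reduces to these moments. I would also check that the translation $\hat X_t\mapsto \hat X_t-\mathbb E[\hat X_t\mid\hat X^{t-1}]+m_t$ is invertible given the history, so it preserves the filtration and hence the directed information. Finally, since $m_t$ is not affected by the translation, the MSE can only decrease: the cross term $\mathbb E[(X_t-\hat X_t)\mid \hat X^{t-1}]$ is set to zero.
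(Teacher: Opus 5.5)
Your proposal is correct and follows essentially the same route as the paper. You reduce via Propositions~\ref{prop:markov_reproduction_kernel}--\ref{prop:joint_gaussianity} and conditional-mean matching to the innovations realization, express rate, MSE and conditional $W_2^2$ through $(\lambda_t,\delta_t,I_t)$, handle the sign of the innovation cross-covariance (your $h_t$ is the paper's $C_t/\lambda_t$), and realize any feasible tuple with \eqref{eq:scalar_realization_parameters}. One small nit: only $\lambda_t>0$ follows automatically from $\sigma_{X_0}^2,\sigma_{W_t}^2>0$, while $\delta_t>0$ does not; it is justified, as you then say, by discarding $\delta_t=0$, which gives infinite rate while the infimum is finite.
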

\begin{IEEEproof}[Proof sketch]
From Proposition \ref{prop:joint_gaussianity} and the innovations parametrization above, the objective in \eqref{eq:markov_nonanticipative_rdp} yields \eqref{eq:scalar_gaussian_nrdpf_characterization}. 
Moreover, from \eqref{eq:GM} we obtain $\lambda_{t+1}=\alpha_t^2\delta_t+\sigma_{W_t}^2$ in \eqref{eq:scalar_kalman_filter}, with
$\lambda_0=\sigma_{X_0}^2$. Let $C_t$ denote the covariance between the source and reconstruction
innovations. For $I_t>0$, Gaussian conditioning gives
$\delta_t=\lambda_t-C_t^2/I_t$. Since changing the sign of $C_t$
does not affect the rate or perception loss, while the MSE is
minimized by $C_t\geq0$, we may take
$C_t=\sqrt{I_t(\lambda_t-\delta_t)}$. Conditional mean matching in \eqref{eq:scalar_conditional_mean_matching} and
the scalar Gaussian Wasserstein formula \cite[Eq.~(1.6), p.~18]{panaretos:2020} yield
$\mathbb E[(X_t-\hat X_t)^2]
=
\delta_t+
\left(
\sqrt{\lambda_t-\delta_t}-\sqrt{I_t}
\right)^2,~\Pi_t
=
\left(
\sqrt{\lambda_t}-\sqrt{I_t}
\right)^2.$ 
Hence, every feasible Gaussian reproduction distribution induces a
feasible tuple in
\eqref{eq:scalar_gaussian_nrdpf_characterization} with the same rate. Conversely, any feasible nondegenerate tuple is realized by
\eqref{eq:scalar_innovations_realization} with design variables \eqref{eq:scalar_realization_parameters}.
These choices induce the prescribed values of
$(\lambda_t,\delta_t,I_t)$ and therefore attain the objective and
both fidelity constraints. If $I_t=0$, the reconstruction innovation
vanishes and necessarily $\delta_t=\lambda_t$ (zero rate), giving the corresponding degenerate realization.
\end{IEEEproof}

Next, we give the solution of  \eqref{eq:scalar_gaussian_nrdpf_characterization}.
\begin{theorem}[Closed-form solution]
\label{thm:closed_form}
Let $\mathbf D\in(0,\infty]^{n+1}$ and $\mathbf P\in[0,\infty)^{n+1}$. Then, the infimum in
\eqref{eq:scalar_gaussian_nrdpf_characterization} is attained, and its
optimal value is
\begin{align}
R_{[0,n]}^{\mathrm{na,G}}
(\mathbf D,\mathbf P)=
\frac{1}{2(n+1)}
\sum_{t=0}^{n}
\log_2
\left(
\frac{\lambda_t^\star}
     {\delta_t^\star}
\right)
\label{eq:scalar_gaussian_nrdpf}
\end{align}
where $\lambda^*_{t+1}=\alpha^2_t\delta^*_{t}+\sigma^2_{W_t}$, $\lambda^*_0=\sigma^2_{X_0}$ and
\begin{align}
\delta_t^\star
=
\begin{cases}
\min\{\lambda_t^\star,D_t\},
&
\ell_t^\star
\leq
\sqrt{|\lambda_t^\star-D_t|}
\\[1.5ex]
\displaystyle
\lambda_t^\star
-
\frac{
\left(
(\ell_t^\star)^2+\lambda_t^\star-D_t
\right)^2
}{
4(\ell_t^\star)^2
},
&
\ell_t^\star
>
\sqrt{|\lambda_t^\star-D_t|}
\end{cases}
\label{eq:scalar_optimal_filtering_variance}
\end{align}
with 
\begin{align}
\ell_t^\star
\triangleq
\left(
\sqrt{\lambda_t^\star}-\sqrt{P_t}
\right)_{+}
\label{eq:scalar_stagewise_ell}
\end{align}
such that $(x)_{+}\triangleq\max\{x,0\}$. 
\end{theorem}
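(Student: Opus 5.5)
The plan is to reduce the multistage problem \eqref{eq:scalar_gaussian_nrdpf_characterization} to a sequence of decoupled per-stage problems. I first solve the per-stage problem explicitly, and then use a telescoping identity together with a monotonicity property to show that the greedy (stagewise-maximal) choice of $\delta_t$ is globally optimal.

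\textbf{Step 1 (telescoping).} Using $\lambda_{t+1}=\alpha_t^2\delta_t+\sigma_{W_t}^2$, I rewrite the objective as
\begin{align*}
\sum_{t=0}^{n}\log_2\frac{\lambda_t}{\delta_t}
=\log_2\lambda_0-\log_2\delta_n+\sum_{t=0}^{n-1}\log_2\Bigl(\alpha_t^2+\frac{\sigma_{W_t}^2}{\delta_t}\Bigr).
\end{align*}
Every term on the right-hand side is nonincreasing in the corresponding $\delta_t$, and $\lambda_0=\sigma_{X_0}^2$ is fixed. Hence the objective is minimized by any feasible sequence $(\delta_t)$ that dominates every other feasible sequence componentwise. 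It therefore suffices to show that the greedy sequence $\delta_t^\star$ of \eqref{eq:scalar_optimal_filtering_variance} is feasible and dominates all feasible sequences.

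\textbf{Step 2 (per-stage maximization).} Fix $\lambda=\lambda_t$ and write $s=\sqrt{\lambda-\delta}\ge0$ and $u=\sqrt{I}\ge0$. The perception constraint is equivalent to $\ell\le u\le\sqrt{\lambda}+\sqrt{P_t}$, where $\ell=\ell_t$ is as in \eqref{eq:scalar_stagewise_ell}. The upper bound is never binding, because the relevant choices of $u$ never exceed $\sqrt\lambda$. The distortion constraint reads $\lambda-s^2+(s-u)^2\le D$. For fixed $s$, the best choice is $u=\max\{s,\ell\}$, and I split into two cases.
\begin{itemize}
\item[(a)] If $u=s$, the constraint reduces to $\delta\le D$, so the largest admissible value is $\delta=\min\{\lambda,D\}$. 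This choice is admissible exactly when $\ell\le\sqrt{(\lambda-D)_+}$. In the subcase $D\ge\lambda$, the choice $s=0$ with $u=\ell$ requires $\ell^2\le D-\lambda$. Combining the two subcases gives the condition $\ell\le\sqrt{|\lambda-D|}$.
\item[(b)] Otherwise $u=\ell$ is binding, and the constraint becomes $\lambda-2s\ell+\ell^2\le D$, that is, $s\ge(\ell^2+\lambda-D)/(2\ell)$. The smallest admissible $s$ gives the second branch of \eqref{eq:scalar_optimal_filtering_variance}. Here I must also check that $0\le s\le\sqrt\lambda$, so that $\delta\in(0,\lambda]$, and that the resulting $\delta$ is positive, using $\ell\le\sqrt\lambda$ and $D>0$.
\end{itemize}
This yields the maximal feasible value $\delta^\star=g_t(\lambda)$. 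It also shows that the stagewise feasible set in $\delta$ is the interval $(0,g_t(\lambda)]$: any $\delta'\le g_t(\lambda)$ is feasible, because increasing $s$ with $u=\max\{s,\ell\}$ keeps both constraints satisfied.

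\textbf{Step 3 (monotonicity and induction).} The key remaining fact, and the step I expect to be the main obstacle, is that $g_t(\lambda)$ is nondecreasing in $\lambda$. In case (a) this is clear. In case (b), with $\ell=(\sqrt\lambda-\sqrt{P_t})_+$, I plan to differentiate $\lambda-(\ell^2+\lambda-D)^2/(4\ell^2)$ with respect to $\lambda$, using $d\ell/d\lambda=1/(2\sqrt\lambda)$, and show that the derivative is nonnegative on the region $\ell>\sqrt{|\lambda-D|}$. I then check continuity of $g_t$ at the case boundary, where both formulas agree.

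If the direct calculus becomes messy, I will instead argue via a feasibility-transport construction: given a feasible pair $(s,u)$ at $\lambda$, I exhibit a feasible pair at a larger $\lambda'$ with $\delta'\ge\delta$. Concretely, I keep $\delta$ fixed, set $s'=\sqrt{\lambda'-\delta}$, and choose $u'=\max\{s',\ell'\}$, then verify the distortion constraint.

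Given monotonicity, a forward induction completes the argument. Suppose a feasible sequence $(\delta_t)$ satisfies $\delta_k\le\delta_k^\star$ for $k<t$. Then $\lambda_t\le\lambda_t^\star$, so
\begin{align*}
\delta_t\le g_t(\lambda_t)\le g_t(\lambda_t^\star)=\delta_t^\star .
\end{align*}
Combined with Step 1, this gives optimality of $(\delta_t^\star)$. Attainment follows because $(\lambda_t^\star,\delta_t^\star,I_t^\star)$ with $I_t^\star=\max\{s_t^\star,\ell_t^\star\}^2$ is itself feasible, where $s_t^\star=\sqrt{\lambda_t^\star-\delta_t^\star}$. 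By Theorem~\ref{thm:scalar_gaussian_characterization}, this tuple is realized by \eqref{eq:scalar_innovations_realization}.
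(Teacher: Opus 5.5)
Your proposal is correct and follows essentially the same route as the paper: you pair $-\log\delta_t$ with $\log\lambda_{t+1}$, project $\sqrt{\lambda_t-\delta_t}$ onto $[\ell_t,\sqrt{\lambda_t}+\sqrt{P_t}]$ to reduce the stage constraints to $\delta_t+[\ell_t-\sqrt{\lambda_t-\delta_t}]_+^2\le D_t$, take the largest feasible $\delta_t$, and propagate forward. Your fallback ``feasibility-transport'' argument for monotonicity is exactly the paper's observation that this reduced constraint is nonincreasing in $\lambda$ for fixed $\delta$ (it holds because $(\sqrt{\lambda}-\sqrt{P_t})_+-\sqrt{\lambda-\delta}$ is nonincreasing in $\lambda$), and your componentwise-domination induction makes the paper's global-optimality step more explicit.
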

\begin{IEEEproof}[Proof sketch]
Nonemptiness and finiteness follow by successively choosing
$I_t=\lambda_t$ and
$\delta_t\in(0,\min\{\lambda_t,D_t/2\}]$. This gives zero perception
loss and distortion at most $2\delta_t\leq D_t$, while the resulting
objective is finite. For fixed $(\lambda_t,\delta_t)$, the objective
does not depend on $I_t$. Hence, projecting
$\sqrt{\lambda_t-\delta_t}$ onto the perception-feasible interval
$\ell_t\leq\sqrt{I_t}\leq\sqrt{\lambda_t}+\sqrt{P_t}$ gives
$I_t^\star
=
\left[
\max\left\{
\sqrt{\lambda_t-\delta_t},\ell_t
\right\}
\right]^2$
and the two fidelity constraints reduce to
$\phi_t(\lambda_t,\delta_t)
\triangleq
\delta_t+
\left[
\ell_t-\sqrt{\lambda_t-\delta_t}
\right]_+^2
\leq D_t$.
Since $\phi_t(\lambda_t,\cdot)$ is increasing whereas the stage-$t$
rate is decreasing in $\delta_t$, the optimal
$\delta_t^\star=F_t(\lambda_t)$ is the largest feasible value in
$(0,\lambda_t]$. This is the endpoint
$\min\{\lambda_t,D_t\}$ whenever
$\ell_t\leq\sqrt{|\lambda_t-D_t|}$, giving the first branch of
\eqref{eq:scalar_optimal_filtering_variance}. Otherwise, both fidelity
constraints are active, and solving
$\ell_t=\sqrt{\lambda_t-\delta_t}+\sqrt{D_t-\delta_t}$
by a difference-of-squares argument yields the second branch of \eqref{eq:scalar_optimal_filtering_variance}. It remains to verify global optimality under
$\lambda_{t+1}=\alpha_t^2\delta_t+\sigma_{W_t}^2$. For $t<n$, the
paired term
$-\log\delta_t+\log\lambda_{t+1}$ is strictly decreasing in
$\delta_t$, while the terminal term $-\log\delta_n$ is strictly
decreasing in $\delta_n$. Moreover,
$\phi_{t+1}(\lambda,\delta)$ is nonincreasing in $\lambda$ for fixed
$\delta$; hence, increasing $\delta_t$, and therefore
$\lambda_{t+1}$, can only relax the next-stage feasibility constraint.
Thus, the forward recursions
$\delta_t^\star=F_t(\lambda_t^\star)$ and
$\lambda_{t+1}^\star
=\alpha_t^2\delta_t^\star+\sigma_{W_t}^2$
are globally optimal. Together with the corresponding $I_t^\star$,
they attain the infimum and yield
\eqref{eq:scalar_gaussian_nrdpf}.
\end{IEEEproof}
\begin{figure}
    \centering
    \includegraphics[width=\linewidth]{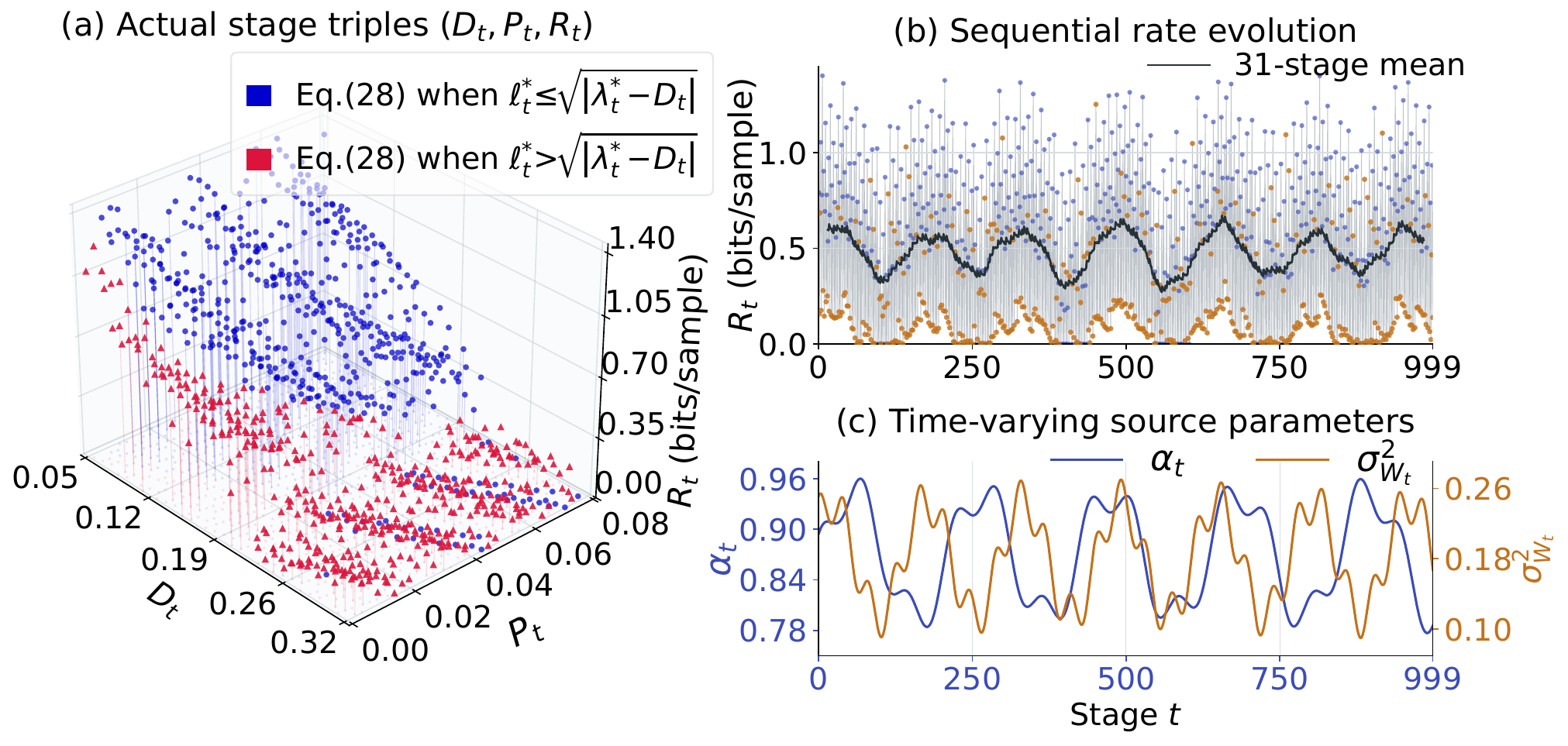}
    \caption{Stagewise reverse-waterfilling solution over \(1000\) stages.}
    \label{figure:1}
\end{figure}
\noindent\textbf{Numerical example.}
In Fig. \ref{figure:1}, we give a pictorial view of the behavior of Theorem \ref{thm:closed_form} for a specific example. We consider $1000$ stages $t=0,\ldots,999$ with $\lambda_0=0.4$, $\alpha_t\in[0.777,0.960]$, and
$\sigma_{W_t}^2\in[0.090,0.270]$. The fidelity budgets are generated using $D_t\in[0.05,0.32]$ and $P_t\in[0,0.08]$. At each stage, Theorem~\ref{thm:closed_form} is applied recursively to compute $\delta_t^\star, \lambda_{t+1}^\star$, and
$R_t=\frac12\log_2(\frac{\lambda_t^\star}{\delta_t^\star})$. Among the $1000$ stages, $526$ employ the MSE-only solution, whereas $474$ employ the MSE-perception solution. \par Theorem~\ref{thm:closed_form} reveals a two-branch, stagewise reverse-waterfilling structure. The MSE constraint determines the classical allocation $\min\{\lambda^\star_t,D_t\}$ while the perception constraint introduces the stage-dependent water level $\ell_t^*$. When this level exceeds the MSE threshold, the solution shifts to the joint MSE–perception branch of \eqref{eq:scalar_optimal_filtering_variance}.
\begin{corollary}[Perfect-realism solution]
\label{cor:perfect_realism}
Suppose that in Theorem~\ref{thm:closed_form}, we let
$\mathbf P=\mathbf 0$ (zero vector). Then
\begin{align}
R_{[0,n]}^{\mathrm{na,G}}(\mathbf D,\mathbf 0)
=
\frac{1}{2(n+1)}
\sum_{t=0}^{n}
\log_2
\left(
\frac{\lambda_t^\star}{\delta_t^\star}
\right)
\end{align}
where
\begin{align}
\delta_t^\star
=
\begin{cases}
\displaystyle
D_t-\frac{D_t^2}{4\lambda_t^\star},
&0<D_t<2\lambda_t^\star\\[2mm]
\lambda_t^\star,
&D_t\geq2\lambda_t^\star.
\end{cases}
\end{align}
\end{corollary}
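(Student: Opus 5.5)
The corollary is a direct specialization of Theorem~\ref{thm:closed_form}, so I will substitute $\mathbf P=\mathbf 0$ into \eqref{eq:scalar_stagewise_ell} and \eqref{eq:scalar_optimal_filtering_variance} and simplify each branch. The recursion $\lambda_{t+1}^\star=\alpha_t^2\delta_t^\star+\sigma_{W_t}^2$ with $\lambda_0^\star=\sigma_{X_0}^2$ and the form of the optimal value \eqref{eq:scalar_gaussian_nrdpf} carry over unchanged. The only work is to compute $\delta_t^\star$ as a function of $(\lambda_t^\star,D_t)$ at each stage.

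With $P_t=0$, the water level becomes $\ell_t^\star=\sqrt{\lambda_t^\star}$, since $\lambda_t^\star>0$. The branch condition $\ell_t^\star\le\sqrt{|\lambda_t^\star-D_t|}$ is therefore $\lambda_t^\star\le|\lambda_t^\star-D_t|$.

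\emph{First case: $D_t\le\lambda_t^\star$.} Here the condition reads $\lambda_t^\star\le\lambda_t^\star-D_t$, i.e. $D_t\le0$, which is impossible because $D_t>0$. So we are in the second branch:
\begin{align*}
\delta_t^\star=\lambda_t^\star-\frac{(2\lambda_t^\star-D_t)^2}{4\lambda_t^\star}=D_t-\frac{D_t^2}{4\lambda_t^\star}.
\end{align*}

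\emph{Second case: $\lambda_t^\star<D_t$.} The condition becomes $2\lambda_t^\star\le D_t$. If $D_t\ge2\lambda_t^\star$ we are in the first branch, and $\delta_t^\star=\min\{\lambda_t^\star,D_t\}=\lambda_t^\star$. If $\lambda_t^\star<D_t<2\lambda_t^\star$ we are in the second branch, and the same algebra gives $D_t-D_t^2/(4\lambda_t^\star)$.

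Merging the two cases yields exactly the stated two-piece formula: $D_t-D_t^2/(4\lambda_t^\star)$ for $0<D_t<2\lambda_t^\star$, and $\lambda_t^\star$ for $D_t\ge2\lambda_t^\star$. The case $D_t=\infty$ falls into the second branch and gives $\delta_t^\star=\lambda_t^\star$, which is consistent.

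The main thing to check is that the second-branch value is admissible, i.e. $0<\delta_t^\star\le\lambda_t^\star$, so that it is a legitimate point of \eqref{eq:scalar_gaussian_nrdpf_characterization}. Write $\delta_t^\star=D_t(1-D_t/(4\lambda_t^\star))$. This is positive because $D_t<2\lambda_t^\star<4\lambda_t^\star$. It satisfies $\delta_t^\star\le\lambda_t^\star$ because $\lambda_t^\star-\delta_t^\star=(2\lambda_t^\star-D_t)^2/(4\lambda_t^\star)\ge0$.

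As a sanity check, the two pieces agree at the boundary $D_t=2\lambda_t^\star$, since $2\lambda_t^\star-\lambda_t^\star=\lambda_t^\star$. One can also confirm the formula directly from the characterization. With $P_t=0$ the perception constraint forces $I_t=\lambda_t$, and the distortion constraint becomes $\delta_t+(\sqrt{\lambda_t-\delta_t}-\sqrt{\lambda_t})^2\le D_t$. When this is active, it rearranges to $2\lambda_t-2\sqrt{\lambda_t(\lambda_t-\delta_t)}=D_t$, which again gives $\delta_t=D_t-D_t^2/(4\lambda_t)$.
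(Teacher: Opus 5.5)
Your proposal is correct and takes the paper's approach, which is to specialize Theorem~\ref{thm:closed_form} to $\mathbf P=\mathbf 0$ (the paper just says the corollary ``follows from'' that theorem). Your case split on $\lambda_t^\star\le|\lambda_t^\star-D_t|$ with $\ell_t^\star=\sqrt{\lambda_t^\star}$, together with the admissibility check and the direct verification via $I_t=\lambda_t$, supplies exactly the algebra the paper omits.
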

\begin{IEEEproof}
This follows from Theorem \ref{thm:closed_form}.
\end{IEEEproof}
\begin{remark}[Special cases]\label{rem:special_cases} 
{\bf (i)} When the perception constraint is inactive, Theorem~\ref{thm:closed_form}
reduces to the classical scalar Gaussian NRDF \cite{gorbunov:1974}, with $\delta_t^\star=\min\{\lambda_t^\star,D_t\}$ and
$I_t^\star=\lambda_t^\star-\delta_t^\star$. {\bf (ii)} If in \eqref{eq:GM} we set $\alpha_t=0$ and $\sigma_{X_0}^2=\sigma_{W_t}^2=\sigma_X^2$ for all $t$, the source is i.i.d. Gaussian. In this case, the optimal
reproduction distribution is memoryless, and hence $P_{X_t\mid\hat X^{t-1}}=P_{X_t}$, $P_{\hat X_t\mid\hat X^{t-1}}=P_{\hat X_t}$.
Consequently, the conditional perception criterion reduces to
$W_2^2(P_{X_t},P_{\hat X_t})$, and our characterization recovers 
the classical Gaussian RDPF of~\cite{zhang:2021} (see also \cite{serra:2024}). 
{\bf (iii)} At perfect
realism, when the source is i.i.d., Corollary~\ref{cor:perfect_realism} recovers the distribution-preserving RDF in \cite[Proposition 2]{li:2011}.   
\end{remark}
\section{Conclusion}
We studied sequential lossy compression under a causal conditional
perception criterion. For first-order Markov sources, we formulated
the NRDPF and established one-shot lower and upper
bounds on the operational variable-length sum-rate using
shared common randomness and a strengthened SFRL. For time-varying
scalar Gauss--Markov sources under MSE and conditional squared
Wasserstein-$2$ fidelity, we proved Gaussian optimality and derived an
explicit characterization and its closed-form solution,
including the perfect-realism regime. Extensions to vector sources
and practical predictive video-coding architectures are left for
future work.

\section*{Acknowledgment}
The work of P. A. Stavrou and Z. He was partially supported by a Huawei France-EURECOM Chair on Future Wireless Networks. The work of P. A. Stavrou is also supported by the SNS JU project 6G-GOALS under the EU Horizon programme (Grant Agreement No. 101139232).

\bibliographystyle{IEEEtran}
\bibliography{string,refs}

\end{document}